\documentclass[twocolumn]{autart}    
\usepackage{graphicx}
\usepackage[nodisplayskipstretch]{setspace}
\usepackage{amsfonts,latexsym,
amssymb,
amsmath,url,stackengine}
\usepackage{booktabs}
\usepackage{amsmath}
\usepackage{picins}
\usepackage{color}
\usepackage{empheq}
\usepackage{float}

\usepackage{tikz}
\usetikzlibrary{arrows.meta}

\tikzset{
  bnd/.style={line width=0.5pt},
  bndin/.style={line width=1.4pt, blue!70!black},
  bndout/.style={line width=1.4pt, red!70!black},
  charline/.style={line width=0.3pt, dashed, dash pattern=on 2pt off 1.5pt},
  arr/.style={line width=0.5pt, -{Stealth[length=2.2mm]}},
  normarr/.style={line width=0.5pt, -{Stealth[length=2mm]}, black!70},
  lbl/.style={font=\small\itshape},
  sharpmark/.style={draw, circle, inner sep=0pt, minimum size=4pt, line width=0.5pt, fill=white}
}

\catcode`\@=11
\def\downparenfill{$\m@th\braceld\leaders\vrule\hfill\bracerd$}
\def\downparenfill{$\m@th\braceld\leaders\vrule\hfill\bracerd$}
\def\overparen#1{\mskip 2mu\mathop{\vbox{\ialign{##\crcr\crcr \noalign{\kern0.4ex}
\downparenfill\crcr\noalign{\kern0.4ex\nointerlineskip}
$\hfil\displaystyle{#1}\hfil$\crcr}}}\limits\mskip 2mu} 
\catcode`\@=12

\usepackage{rgsMacros}
\usepackage{subcaption}
\usepackage{bm}

\usepackage{enumitem}
\usepackage{balance}
\usepackage{comment}

\definecolor{olivegreen}{rgb}{0,0.5,0}

\newtheorem{thmm}{Theorem}
\newtheorem{propo}{Proposition}
\newtheorem{lemm}{Lemma}
\newtheorem{exe}{Example}
\newtheorem{corol}{Corollary}
\newtheorem{ass}{Assumption}
\newtheorem{prope}{Property}
\newtheorem{defin}{Definition}
\newtheorem{remm}{Remark}

\newenvironment{remark}{\begin{remm} \rm}{\hfill \end{remm}}

\newenvironment{assumption}{\begin{ass}}{\end{ass}}

\newenvironment{theorem}{\begin{thmm}}{\hfill $\square$ \end{thmm}}

\newenvironment{definition}{\begin{defin}}{\hfill \end{defin}}

\newenvironment{proof}{{\it Proof. }}{\hfill $\blacksquare$ }

\newtheorem{dwellt}{Condition}

\newtheorem{feedback}{Feedback}

\newtheorem{adapt_alg}{Algorithm}

\newif\ifitsdraft

\makeatletter
\def\@citex[#1]#2{%
  \let\@citea\@empty
  \@cite{\@for\@citeb:=#2\do
    {\@citea\def\@citea{,\penalty\@m\hskip.1em}%
     \edef\@citeb{\expandafter\@firstofone\@citeb\@empty}%
     \if@filesw\immediate\write\@auxout{\string\citation{\@citeb}}\fi
     \@ifundefined{b@\@citeb}{\mbox{\reset@font\bfseries ?}%
        \G@refundefinedtrue
        \@latex@warning
          {Citation `\@citeb' on page \thepage \space undefined}}%
        {\hbox{\csname b@\@citeb\endcsname}}}}{#1}}
\makeatother
 
\begin{document}

\setlength{\abovedisplayskip}{6pt}
\setlength{\belowdisplayskip}{6pt}
\setlength{\abovedisplayshortskip}{3pt}
\setlength{\belowdisplayshortskip}{3pt}

\begin{frontmatter}
\title{Backstepping Control of First-Order Hyperbolic Equations in Arbitrary Dimensions with Non-Trapping Characteristics} 

\author{Mohamed Camil Belhadjoudja}\ead{m2camilb@uwaterloo.ca}

\address{Department of Applied Mathematics, University of Waterloo, 200 University Avenue West, Waterloo, ON, Canada, N2L 3G1}

\begin{keyword}    
Hyperbolic PDEs; Backstepping; Finite-Time Stability; Characteristic Methods
\end{keyword}

\begin{abstract}
This paper presents a backstepping approach for the boundary control of first-order hyperbolic equations with spatially varying coefficients posed on domains of arbitrary dimension. The method is based on a change of variables induced by the characteristic flow of the time-invariant transport operator, transforming the original multidimensional system into a continuum of decoupled one-dimensional hyperbolic equations evolving along individual characteristic curves. A backstepping controller is then designed for each equation in the decomposition, and the resulting control laws are reassembled in the original coordinates to achieve finite-time stabilization of the full system. The framework relies on the existence of characteristic curves foliating the spatial domain, with uniformly bounded transit times (non-trapping).
\end{abstract}

\end{frontmatter}

\section{Introduction} \label{Introduction}

First-order hyperbolic partial differential equations (PDE)s model the transport of quantities in a broad range of applications, including traffic networks, fluid flows, heat transport, reactive processes, and population dynamics. While one-dimensional reductions may provide useful approximations in some settings, many systems of practical interest are intrinsically higher-dimensional.

Eighteen years ago, \cite{krstic_H} introduced a backstepping method for the
boundary control of general one-dimensional first-order hyperbolic equations.
This seminal contribution unleashed a large body of work and a series of
celebrated successes in the control of first-order hyperbolic systems (see the non-exhaustive list of references \cite{traffic_book,coron_paper,first_c,meglio,hu,auriol,auriol2,auriol3,irscheid,ghoussein,ghoussein2,guan,aamo,ensemble0,ensemble1,ensemble2,federico,bernard,volt_H,back_rob,coron_system} and the survey \cite{krstic_review}). Yet, despite
these advances, the backstepping theory has remained confined to the one-dimensional
setting.

To the best of our knowledge, extensions of the backstepping method to the
multidimensional case have focused exclusively on parabolic equations
\cite{liu,meurer,vaz_key1,vaz_key2,vaz_2,vaz_3,vaz_4,vaz_5,vaz_6,vaz_7,vaz_8,camil}. With the exception of \cite{liu,meurer,vaz_key1,vaz_key2,vaz_3,camil}, the aforementioned works exploit
the fact that, for very specific geometries satisfying symmetry assumptions or translational invariance,
and under relatively restrictive structural conditions on the PDE coefficients,
the original multidimensional parabolic equation can be transformed into a
sequence of decoupled one-dimensional parabolic equations. A backstepping
controller is then designed for each equation, and the resulting controllers
are assembled into a single controller for the original system. While such a
decomposition appears to be available only in highly particular cases for
parabolic equations, the present paper introduces a change of variables that
achieves an analogous decomposition for a \textit{fairly general} class of
multidimensional first-order hyperbolic equations, without any symmetry
assumption whatsoever, yielding a continuum of decoupled one-dimensional
first-order hyperbolic equations to which the approach of \cite{krstic_H}
directly applies.

Our change of variables is rooted in the method of characteristics \cite[Sections 2.1 and 3.2]{evans},
which is classically employed to solve multidimensional transport equations by
reducing them to ordinary differential equations (ODE)s whose solutions trace
the so-called characteristic curves. For
the evolution transport equation $\partial v/\partial t = a(x)\cdot \nabla v$,
these characteristic curves are defined in the space-time domain. Our
construction departs from this classical picture by discarding the time
variable in the very definition of characteristics: we consider the
characteristic curves of the \emph{time-invariant} transport equation
$a(x)\cdot \nabla v = 0$. Each such curve is parametrized by two variables
$(\sigma, \rho)$, where $\rho$ identifies the curve and $\sigma$ denotes the position along it. We prove that a general multidimensional
first-order hyperbolic equation of the form
$\partial v/\partial t = a(x)\cdot \nabla v + (\text{destabilizing terms})$
can be mapped, through the change of variables $x\leftrightarrow (\sigma,\rho)$, into a continuum of
one-dimensional first-order hyperbolic equations evolving along these
characteristic curves and expressed in the variables $(\sigma, t)$, with the
family indexed by $\rho$. The controller of \cite{krstic_H} is then applied
to each PDE in the resulting decomposition. The sole assumption required to
conclude finite-time stability of the original multidimensional system from
that of the decomposed PDEs is that the transit time of characteristics ---
the range of $\sigma$ --- be uniformly bounded across all characteristic
curves. This condition could fail, for example, when the velocity field generates limit cycles, vanishes in the interior of
the spatial domain, or becomes arbitrarily small along certain trajectories,
causing characteristics to stagnate.

It is worth emphasizing that the control of a continuum of hyperbolic
equations --- also referred to as an ensemble of hyperbolic equations --- is
not new, and this is emphatically not the contribution of the present paper.
Our contribution is to establish that the boundary control of a multidimensional
hyperbolic equation reduces, through our
change of variables, to the control of an ensemble of one-dimensional hyperbolic
equations. To the best of our knowledge, this relationship is identified here
for the first time. Prior works on the control of ensembles of hyperbolic
equations have been driven by entirely different motivations and have led to
structurally different ensembles. For instance, \cite{ensemble0,ensemble1,ensemble3} design backstepping
controllers for ensembles of equations and subsequently discretize the
resulting kernel to obtain a controller for a large-scale finite coupled
hyperbolic system; there, the ensemble formulation is a computational tool for
managing complexity in large-scale systems. By contrast, \cite{ensemble2} considers ensembles arising naturally from modeling: physical systems in which agents--vehicles, drivers, particles--are continuously parametrized, and whose collective dynamics are more faithfully captured by a continuum model. Beyond addressing, for the first time since the work of
\cite{krstic_H}, the problem of backstepping control for multidimensional hyperbolic
equations, the present paper thus offers a new and fundamental motivation for
the emerging theory of control of ensembles (and ensembles of ensembles) of hyperbolic equations.

\textit{Notations.} Given a function $v:\mathbb{R}^n\times [0,+\infty)\to \mathbb{R}$, $(x,t)\mapsto v(x,t)$, with $x=(x_1,x_2,...,x_n)\in \mathbb{R}^n$, we let 
$$\nabla v = \left(\frac{\partial v}{\partial x_1},\frac{\partial v}{\partial x_2}, ..., \frac{\partial v}{\partial x_n}\right),$$
where $\partial v/\partial x_i$ denotes the partial derivative of $v$ with respect to $x_i$. Similarly, we denote by $\partial v/\partial t$ the partial derivative of $v$ with respect to $t$. We denote by $v(\cdot,t)$ the function $x\mapsto v(x,t)$. Given two vectors $x,y\in \mathbb{R}^n$ with $x=(x_1,x_2,...,x_n)$ and $y=(y_1,y_2,...,y_n)$, we let 
$x\cdot y = x_1 y_1 + x_2 y_2 + ... + x_n y_n$, and $|x| = \sqrt{x_1^2+x_2^2+...+x_n^2}$.

\section{System description}\label{system_description}
This paper extends some of the results of \cite{krstic_H} to arbitrary spatial dimensions $n \geq 1$. The equation considered in the aforementioned reference is
\begin{align}
\frac{\partial v}{\partial t} =&~ \frac{\partial v}{\partial x} + \lambda(x) v + g(x)v(0) \nonumber \\
&~+ \int_0^x f(x,y)v(y)\,dy \qquad x \in (0,1), \ t>0, \label{p_eq1}
\end{align}
subject to the boundary condition
\begin{align}
v(1,t) = U(v(\cdot,t)) \qquad t\geq 0. \label{p_eq2}
\end{align}
Here, the function $v : [0,1] \times [0,+\infty) \to \mathbb{R}$ denotes the state, the coefficients $\lambda$, $g$, and $f$ are assumed to be continuous, and $U(v(\cdot,t)) \in \mathbb{R}$ is the control variable.

When $g$ and $f$ are positive and sufficiently large, the equilibrium $\{v=0\}$ for \eqref{p_eq1}--\eqref{p_eq2} is unstable under zero input $U = 0$. One of the main contributions of \cite{krstic_H} is the design of a finite-time stabilizing control law $U$ using a backstepping approach.

In this section, we formulate a higher-dimensional analogue of the control system \eqref{p_eq1}--\eqref{p_eq2}. Since the underlying equation is hyperbolic, particular care is required to ensure that the resulting system is well-posed. For example, the location of the control must reflect both the geometry of the domain and the direction of the \textit{velocity field}, i.e., \textit{minus} the coefficient of the first-order spatial derivative term in \eqref{p_eq1}. In one dimension, the velocity reduces to a scalar, so the control location (either at $x=0$ or $x=1$) is determined solely by its sign.

\subsection{Geometry, velocity, and characteristics}

Let $\Omega \in \mathbb{R}^n$, with $n\in \mathbb{N}^*$, be a bounded open and connected set, that satisfies the following assumption.

\begin{assumption}\label{ass1}
$\Omega$ is of class $\mathcal{C}^1$ \cite[Page 626]{evans}.
\hfill $\bullet$
\end{assumption}

Our system will be defined on $\Omega$.

Since $\Omega$ is of class $\mathcal{C}^1$, it admits a unique unit outward normal $\nu : \partial\Omega \to \mathbb{R}^n$, where $\partial \Omega$ is the boundary of $\Omega$. 

A first-order hyperbolic equation on $\Omega$ describes (among other phenomena) the transport of a quantity, such as a concentration, a density, or a temperature, through the domain $\Omega$ with a given velocity field $-a$, where 
\begin{align*}
&a:\mathbb{R}^n\to \mathbb{R}^n, \\ 
&x=(x_1,x_2,...,x_n) \mapsto a(x) = (a_1(x),a_2(x),...,a_n(x)).
\end{align*}
The following regularity assumption is imposed on $a$.
\begin{assumption}\label{ass2}
$a$ is lipschitz.
\hfill $\bullet$
\end{assumption}
The sign of $\nu \cdot a$ determines whether the transported quantity enters the domain, leaves, or travels 
tangentially along $\partial\Omega$: if $\nu(x)\cdot a(x)<0$, the velocity field points outward at $x$ 
and the quantity is leaving $\Omega$ (recall that the velocity field is by convention $-a$, not $a$); if $\nu(x)\cdot a(x)>0$, it points inward and 
the quantity is entering $\Omega$; and if $\nu(x)\cdot a(x)=0$, the velocity is 
tangent to $\partial\Omega$ at $x$ and the quantity does not cross the boundary there. 
Accordingly, we define the following subsets of $\partial \Omega$:
\begin{itemize}
    \item \textit{Inflow boundary} 
    $$\Gamma^+ = \{x\in \partial \Omega \ : \ \nu(x)\cdot a(x) >0\},$$
    \item \textit{Ouflow boundary} 
    $$\Gamma^- = \{x\in \partial \Omega \ : \ \nu(x) \cdot a(x)<0\},$$
    \item \textit{Tangential boundary} 
    $$\Gamma^o =\{x\in \partial \Omega \ : \ \nu(x)\cdot a(x)=0\}.$$
\end{itemize}
Note that $\Gamma^+$, $\Gamma^-$, and $\Gamma^o$ are mutually disjoint by construction and together cover the entire boundary:
$$ \partial \Omega = \Gamma^+ \cup \Gamma^- \cup \Gamma^o.$$

We now define the characteristics associated to $a$. For each $x \in \bar{\Omega}$, where $\bar{\Omega}$ denotes the closure of $\Omega$ with respect to the Euclidean norm $|\cdot |$, the characteristic curve 
passing through $x$ is the unique complete classical solution 
$s \in \mathbb{R} \mapsto \phi(s;x) \in \mathbb{R}^n$ to the ODE
\begin{align}
\frac{d \phi}{ds} = a(\phi), \quad \phi(0;x)=x, \label{char_eq}
\end{align}
whose existence and uniqueness follow from the Cauchy--Lipschitz theorem, applicable 
here by Assumption~\ref{ass2}. Its \textit{entry time} $\tau^+(x)$ and 
\textit{exit time} $\tau^-(x)$ are defined as
\begin{align}
\tau^+(x) &= \inf\{s\geq 0 \ : \ \phi(s;x)\in \Gamma^+\}, \label{4}\\
\tau^-(x) &= \sup\{s\leq 0 \ : \ \phi(s;x)\in \Gamma^-\}, \label{5}
\end{align}
whenever these quantities exist. 

\begin{figure}[H]
    \centering
    \includegraphics[width=\linewidth]{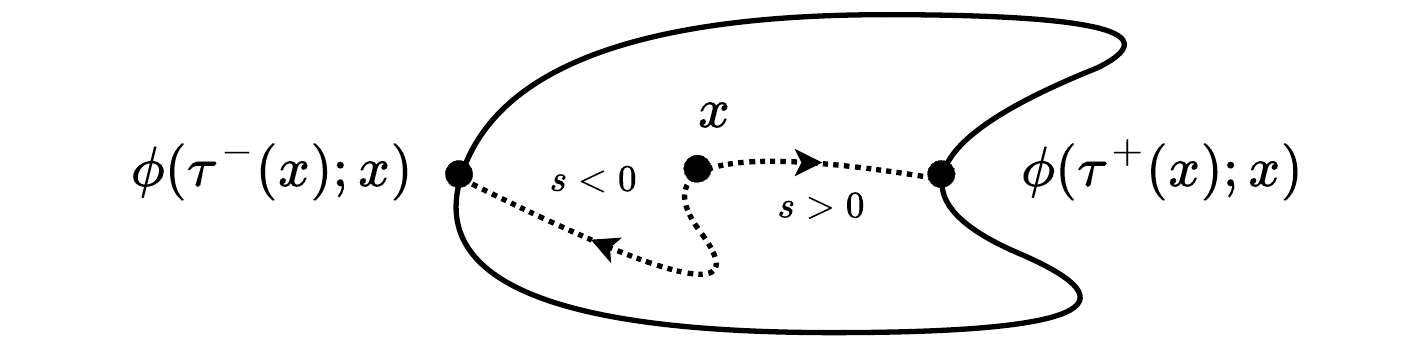}
\end{figure}

We impose the following key assumption on the characteristic curves. A similar condition appears in the theory of geometric optics, where the characteristic curves are the light rays (the solutions to the Eikonal equation), and where it is referred to as the \textit{non-trapping condition}: every ray must exit the medium in finite time \cite{optics}.
\begin{assumption}[Non-trapping characteristics]\label{ass4}
For all $x\in \bar{\Omega}\setminus \Gamma^o$, $\tau^+(x)$ and $\tau^-(x)$ exist and are finite. Moreover, 
\begin{align*}
T_{\max} = \sup_{x\in \bar{\Omega}\setminus \Gamma^o} \{\tau^+(x) - \tau^-(x)\} < +\infty.
\end{align*}
\hfill $\bullet$
\end{assumption}

\subsection{First-order hyperbolic equations on $\Omega$}
As an extension of \eqref{p_eq1}-\eqref{p_eq2} to higher dimensions, we consider on $\Omega$ the first-order hyperbolic equation 
\begin{align}
\frac{\partial v}{\partial t} =&~ a(x)\cdot \nabla v + \lambda(x)v + g(x)v(\rho(x)) \nonumber \\
&~+ \int_{D(x)} f(x,y)v(y) \ d\ell (y), \label{eq_H} 
\end{align}
where $v: \bar{\Omega}\times [0,+\infty) \to \mathbb{R}$ is the state, $\rho(x)$ is the \textit{recirculation point} 
\begin{align}
\rho(x) = \phi(\tau^-(x);x),
\end{align}
$D(x)$ is the \textit{downstream curve} 
\begin{align}
D(x) = \{\phi(s;x) \ : \ \tau^-(x)\leq s \leq 0\},
\end{align}
and $d\ell(y)$ is the arc length element along $D(x)$. Note that 
\begin{align}
&\int_{D(x)} f(x,y)v(y) \ d\ell (y) \nonumber \\
&= \int_{\tau^-(x)}^0f(x,\phi(s;x))v(\phi(s;x)) |a(\phi(s;x))|\ ds.
\end{align}
The coefficients $\lambda, g : \bar{\Omega} \to \mathbb{R}$ and $f:\bar{\Omega}\times \bar{\Omega} \to \mathbb{R}$ satisfy the following assumption. 
\begin{assumption}\label{ass5}
For each $\rho \in \Gamma^-$, the maps
\begin{align*}
\sigma \in [0,\tau^+(\rho)] &\mapsto \lambda(\phi(\sigma;\rho)), \\
\sigma \in [0,\tau^+(\rho)] &\mapsto g(\phi(\sigma;\rho)), \\
(\sigma, \sigma') \in [0,\tau^+(\rho)]^2 &\mapsto f(\phi(\sigma;\rho), \phi(\sigma';\rho)),
\end{align*}
are continuous.
\hfill $\bullet$
\end{assumption}
\begin{remark}
Note that since $\phi(\cdot;\rho)$ is continuous for each $\rho\in \Gamma^-$, then Assumption \ref{ass5} holds trivially if, e.g., $\lambda$, $g$, and $f$ are continuous on their domains of definition. 
\hfill $\bullet$
\end{remark}
Finally, we impose the boundary condition 
\begin{align}
v(z,t) = U(v(\cdot,t),z) \qquad z\in \Gamma^+, \ t\geq 0, \label{bc}
\end{align}
where $U(v(\cdot,t),t)\in \mathbb{R}$ is a control variable to be designed to stabilize $\{v=0\}$ in finite time, in a sense to be specified later. In the sequel, to simplify the notations, we may write $U(z,t)$ instead of $U(v(\cdot,t),z)$.

\begin{remark}
Instead of \eqref{eq_H}, one could consider an equation in which $\rho$ is a general map, and the integration domain $D(x)$ is an arbitrary curve. However, the approach developed in this paper does not apply directly to that setting, to the best of our knowledge.
\hfill $\bullet$
\end{remark}

\section{Concept of solutions and reduction to a continuum of one-dimensional first-order hyperbolic 
equations}

As mentioned above, the objective is to achieve finite-time stabilization of 
$\{v = 0\}$. To state this objective precisely, we first need to define carefully 
the concept of solutions under consideration. A natural first candidate is the 
notion of classical solution.

\begin{definition}[Classical solution]\label{def_classical}
Let $v_o \in C(\bar{\Omega}\setminus\Gamma^o)$. Given a feedback law $U$, a \emph{classical solution} to 
\eqref{eq_H}--\eqref{bc} starting from $v_o$ is any function 
$v \in C((\bar{\Omega}\setminus\Gamma^o) \times [0,+\infty)) \cap 
C^1(\Omega \times (0,+\infty))$ that satisfies 
\eqref{eq_H} for all $(x,t) \in \Omega \times (0,+\infty)$, 
\eqref{bc} for all $(z,t) \in \Gamma^+ \times [0,+\infty)$, and 
$v(x,0) = v_o(x)$ for all $x \in \bar{\Omega}\setminus\Gamma^o$.
\hfill $\bullet$
\end{definition}

Let us suppose for now that we are given a classical solution $v$ to 
\eqref{eq_H}--\eqref{bc}. Roughly speaking, we will show that the restriction of $v$ to each characteristic curve solves a one-dimensional equation of the form \eqref{p_eq1}--\eqref{p_eq2}. The key idea is to first \textit{straighten out} the characteristics of
\eqref{eq_H}: rather than working directly on $\Omega$ with the spatial coordinates
$x = (x_1,x_2,\ldots,x_n)$, we introduce a change of variables that replaces $x\in \Omega$ by a
pair $(\sigma,\rho)$, where $\rho \in \Gamma^-$ identifies
\textit{which} characteristic curve passes through $x$, and $\sigma \geq 0$ records
\textit{how long} the flow has been traveling along that curve from $\rho$ to $x$. In these new coordinates, the transport term $a(x) \cdot \nabla v$ in
\eqref{eq_H} reduces to a simple partial derivative with respect to $\sigma$, and the
full $n$-dimensional system \eqref{eq_H}--\eqref{bc} decouples into a continuum of
one-dimensional systems of precisely the same form as \eqref{p_eq1}--\eqref{p_eq2},
each one evolving along a characteristic curve parameterized by $\rho$.

We now make this construction precise. For each $x \in \bar{\Omega}\setminus \Gamma^o$, we define the
\textit{time to exit}
\begin{align}
    \sigma(x) = -\tau^-(x) \geq 0,
\end{align}
that is, $\sigma(x)$ is the time the characteristic flow takes to travel from $x$ to $\rho(x) = \phi(\tau^-(x);x) \in \Gamma^-$. In particular, $\sigma(x) = 0$ if and only if
$x \in \Gamma^-$, and
more generally $\sigma(x) \in [0, T(x)]$, where $T(x)$ is the transit time of the characteristic passing through $x$,
defined as $T(x) = \tau^+(x) - \tau^-(x) > 0$.

Note that, since by Assumption~\ref{ass2} the ODE \eqref{char_eq} admits unique
solutions, characteristic curves cannot intersect: if two solutions of \eqref{char_eq}
coincide at any single time, they must coincide for all times. Therefore
each point $x \in \bar{\Omega}\setminus \Gamma^o$ belongs to exactly one characteristic, and the
transit time is constant along that characteristic: $T(\phi(s;\rho)) = T(\rho) = \tau^+(\rho)$ for all $s\in[0,T(\rho)]$ and all $\rho\in\Gamma^-$.

Next, we introduce the \textit{transformed domain}
\begin{align}
    \hat{\Omega} = \left\{(\sigma,\rho) \in \mathbb{R} \times \Gamma^- \ : \
    0 \leq \sigma \leq T(\rho) \right\},
\end{align}
with $T(\rho)$ the transit time of the characteristic passing through $\rho$, and the map
\begin{align}
    \Psi : \bar{\Omega}\setminus \Gamma^o \to \hat{\Omega}, \qquad
    x \mapsto \Psi(x) = \left(\sigma(x),\, \rho(x)\right).
\end{align}
The map $\Psi$ is bijective. Indeed, given any $(\sigma, \rho) \in \hat{\Omega}$,
the point $x = \phi(\sigma;\rho)$ satisfies $\rho(x) = \rho$ and $\sigma(x) = \sigma$, so $\Psi$ is surjective. Injectivity follows
from the non-intersection of characteristics: if $\Psi(x_1) = \Psi(x_2)$, then
$x_1$ and $x_2$ lie on the same characteristic and reach the exit point $\rho(x_1)=\rho(x_2)$
at the same time, hence $x_1 = x_2$. The inverse of $\Psi$ is therefore $\Psi^{-1}(\sigma,\rho) = \phi(\sigma;\rho)$.

We now use $\Psi$ to rewrite \eqref{eq_H}--\eqref{bc} in the coordinates
$(\sigma,\rho) \in \hat{\Omega}$. To this end, we introduce the transformed state
$u : \hat{\Omega} \times [0,+\infty) \to \mathbb{R}$ defined by
\begin{align}
    u(\sigma, \rho, t) = v\!\left(\Psi^{-1}(\sigma,\rho),\, t\right)
    = v\!\left(\phi(\sigma;\rho),\, t\right).
    \label{u-def}
\end{align}
In particular, $u(0,\rho,t) = v(\rho,t)$ is
the value of $v$ at the exit point $\rho$, and $u(T(\rho),\rho,t) =
v(\phi(T(\rho);\rho),t)$ is the value of $v$ at the corresponding entry point on
$\Gamma^+$.

\begin{figure}[H]
    \centering
    \includegraphics[width=0.9\linewidth]{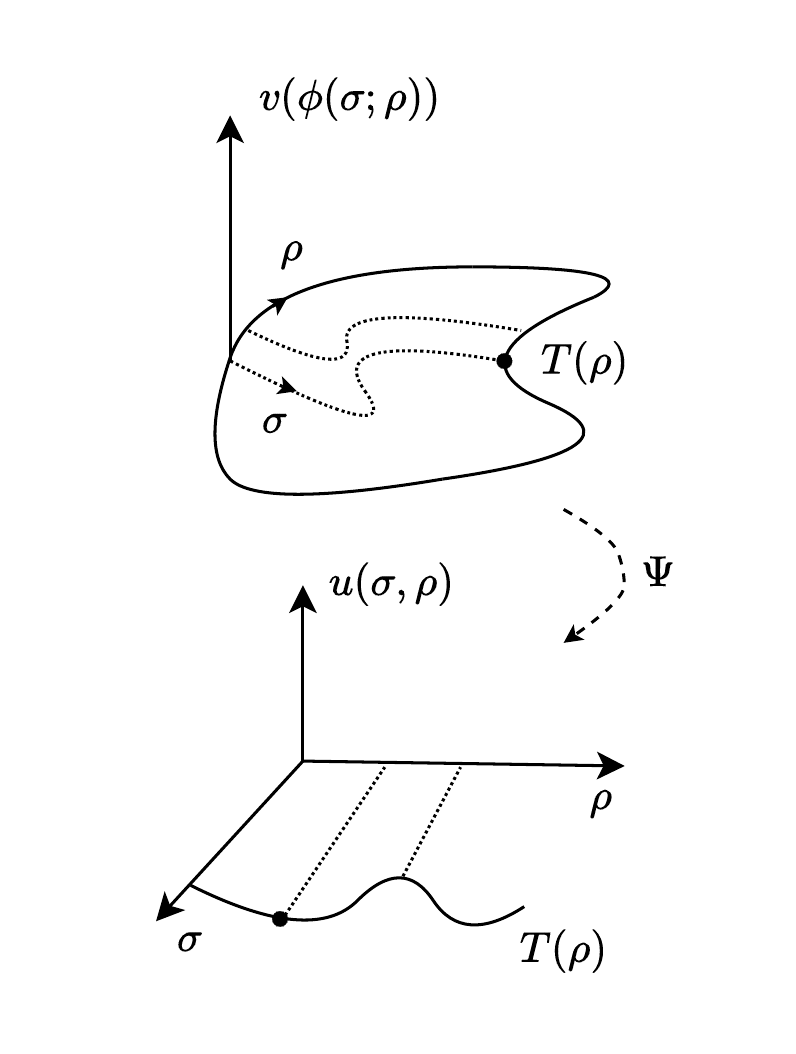}
\end{figure}

The following main result shows that $u$ satisfies a continuum of equations of exactly the
same form as \eqref{p_eq1}--\eqref{p_eq2}.

\begin{theorem}\label{thm1}
Let Assumptions \ref{ass1}--\ref{ass4} hold, and let $u$ be defined by \eqref{u-def} with $v$ a classical solution to \eqref{eq_H}--\eqref{bc}. Then, for each $\rho\in \Gamma^{-}$ and for all $\sigma \in (0,T(\rho))$,
\begin{align}
    \frac{\partial u}{\partial t} =&~
    \frac{\partial u}{\partial \sigma}
    + \hat{\lambda}(\sigma,\rho)\, u(\sigma,\rho,t)
    + \hat{g}(\sigma,\rho)\, u(0,\rho,t) \nonumber \\
    &~+ \int_0^\sigma \hat{f}(\sigma,\rho,\sigma')\,
    u(\sigma',\rho,t) \ d\sigma', \label{c_1}
\end{align}
together with the boundary condition
\begin{align}
    u\!\left(T(\rho),\rho,t\right) =
    U\!\left(\phi(T(\rho);\rho),\, t\right) \qquad t \geq 0,
    \label{c_2}
\end{align}
where the transformed coefficients are given by
\begin{align*}
    \hat{\lambda}(\sigma,\rho) &=
    \lambda\!\left(\Psi^{-1}(\sigma,\rho)\right), \\
    \hat{g}(\sigma,\rho) &=
    g\!\left(\Psi^{-1}(\sigma,\rho)\right), \\
    \hat{f}(\sigma,\rho,\sigma') &=
    f\!\left(\Psi^{-1}(\sigma,\rho),\,\Psi^{-1}(\sigma',\rho)\right)
    \left|a\!\left(\Psi^{-1}(\sigma',\rho)\right)\right|.
\end{align*}
Conversely, if $u(\cdot,\rho,\cdot)$ satisfies \eqref{c_1}--\eqref{c_2} for every
$\rho\in\Gamma^-$, and if $u\circ \Psi \in C((\bar{\Omega}\setminus\Gamma^o) \times [0,+\infty)) \cap 
C^1(\Omega \times (0,+\infty))$, then $v = u \circ \Psi$ is a classical solution to
\eqref{eq_H}--\eqref{bc} in the sense of Definition \ref{def_classical}.
\end{theorem}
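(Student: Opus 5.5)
The proof is a direct chain-rule computation along each characteristic, combined with the flow (semigroup) property of $\phi$. Fix $\rho\in\Gamma^-$ and $\sigma\in(0,T(\rho))$, and set $x=\phi(\sigma;\rho)$.

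\textbf{Step 1: the point $x$ lies in $\Omega$.} We first check that $x$ is an interior point, so that \eqref{eq_H} holds there and $v$ is $C^1$ near $(x,t)$. We need $x\notin\partial\Omega$ for $0<\sigma<T(\rho)$. By \eqref{4}--\eqref{5}, the open arc $\{\phi(s;\rho):0<s<\tau^+(\rho)\}$ contains no point of $\Gamma^+$. It also contains no point of $\Gamma^-$: the backward exit time of such a point would otherwise be smaller in modulus than $\sigma$.

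Tangential contacts with $\Gamma^o$ need a separate argument, and this is the main obstacle. I would argue that the entire arc between $\rho$ and the entry point is traversed inside $\bar\Omega$ and meets no boundary point other than its endpoints. The natural reading of $\tau^\pm$ (first boundary crossing) together with Assumption~\ref{ass4} gives this; otherwise I would restrict to the subset where it holds, which the paper's definition of $\hat\Omega$ implicitly assumes.

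\textbf{Step 2: the transport term.} Since $v$ is $C^1$ near $(x,t)$ and $\phi(\cdot;\rho)$ is $C^1$ with derivative $a(\phi)$,
\begin{align*}
\frac{\partial u}{\partial\sigma}(\sigma,\rho,t)=\nabla v(\phi(\sigma;\rho),t)\cdot a(\phi(\sigma;\rho)),
\end{align*}
and $\partial u/\partial t=\partial v/\partial t$ at $x$. Hence $a\cdot\nabla v$ becomes $\partial u/\partial\sigma$, and $\lambda(x)v(x)$ becomes $\hat\lambda u$.

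\textbf{Step 3: the recirculation point.} By uniqueness of characteristics, $\phi(s;\phi(\sigma;\rho))=\phi(s+\sigma;\rho)$. Thus $\tau^-(x)=-\sigma$ and $\rho(x)=\phi(0;\rho)=\rho$, so $g(x)v(\rho(x),t)=\hat g(\sigma,\rho)\,u(0,\rho,t)$.

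\textbf{Step 4: the integral term.} Using the parametrized form of the integral over $D(x)$ stated before Assumption~\ref{ass5}, substitute $\sigma'=s+\sigma$, so that $s\in[-\sigma,0]$ maps to $\sigma'\in[0,\sigma]$ and $\phi(s;x)=\phi(\sigma';\rho)$. This yields $\int_0^\sigma\hat f(\sigma,\rho,\sigma')u(\sigma',\rho,t)\,d\sigma'$. Continuity from Assumption~\ref{ass5} makes the integrals well defined.

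\textbf{Step 5: the boundary condition.} The point $\phi(T(\rho);\rho)$ is the entry point and lies in $\Gamma^+$, because $T(\rho)=\tau^+(\rho)$ and $\Gamma^+$ is reached at the infimum; this uses closedness of the relevant boundary piece, or the paper's convention. Condition \eqref{bc} at $z=\phi(T(\rho);\rho)$ then gives \eqref{c_2}.

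\textbf{Converse.} Given the regularity hypothesis on $v=u\circ\Psi$, every $x\in\Omega$ can be written uniquely as $\phi(\sigma;\rho)$ with $\sigma\in(0,T(\rho))$, by bijectivity of $\Psi$ and Step 1. The identities of Steps 2--4 hold for any $C^1$ function, so reading them backwards turns \eqref{c_1} into \eqref{eq_H} at $x$. Every $z\in\Gamma^+$ is the entry point $\phi(T(\rho(z));\rho(z))$ of its characteristic, so \eqref{c_2} gives \eqref{bc}. The initial condition transfers trivially.
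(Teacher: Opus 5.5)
Your proposal is correct and follows the paper's proof essentially step for step: the chain rule along the characteristic for the transport term (you differentiate $\sigma\mapsto v(\phi(\sigma;\rho),t)$ directly, whereas the paper goes through $\phi(s;\phi(\sigma;\rho))=\phi(s+\sigma;\rho)$), the flow property to get $\tau^-(x)=-\sigma$ and $\rho(x)=\rho$, the shift $\sigma'=s+\sigma$ in the nonlocal integral, \eqref{bc} at the entry point, and the converse obtained by reading the identities backwards. The geometric caveats you raise in Steps 1 and 5 are covered by the paper's earlier assertion that $\Psi^{-1}(\sigma,\rho)=\phi(\sigma;\rho)$ maps $\hat{\Omega}$ bijectively onto $\bar{\Omega}\setminus\Gamma^o$, so you can cite it instead of hedging: it keeps the arc off $\Gamma^o$, and since $\overline{\Gamma^+}\subset\Gamma^+\cup\Gamma^o$, it also places $\phi(T(\rho);\rho)$ in $\Gamma^+$.
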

\begin{proof}
We transform each term of \eqref{eq_H} separately after the substitution
$x = \Psi^{-1}(\sigma,\rho) = \phi(\sigma;\rho)$.

\textit{Step 1: time derivative.}
Differentiating both sides of \eqref{u-def} with respect to $t$, and using the
fact that $\Psi^{-1}(\sigma,\rho)$ does not depend on $t$, we obtain
\begin{align}
    \frac{\partial u}{\partial t}(\sigma,\rho,t) =
    \frac{\partial v}{\partial t}\!\left(\Psi^{-1}(\sigma,\rho),t\right).
    \label{step1}
\end{align}
\textit{Step 2: transport term.}
We wish to show that 
$$a(x) \cdot \nabla v(x,t) = \frac{\partial u}{\partial
\sigma}(\sigma,\rho,t)$$
at $x = \Psi^{-1}(\sigma,\rho)$. To this end, we differentiate the function $s \mapsto v(\phi(s;x),t)$, to obtain,
\begin{align}
    \frac{d}{ds}\, v(\phi(s;x),t) &=
    \frac{d\phi(s;x)}{ds} \cdot \nabla v(\phi(s;x),t) \nonumber \\
    &=
    a(\phi(s;x)) \cdot \nabla v(\phi(s;x),t).
    \label{chain}
\end{align}
Evaluating \eqref{chain} at $s = 0$ and using $\phi(0;x) = x$ yields
\begin{align}
    a(x) \cdot \nabla v(x,t) =
    \frac{d}{ds}\bigg|_{s=0} v(\phi(s;x),t).
    \label{transport-identity}
\end{align}
We now set $x = \Psi^{-1}(\sigma,\rho) = \phi(\sigma;\rho)$ in
\eqref{transport-identity}. The right-hand side becomes
\begin{align}
    \frac{d}{ds}\bigg|_{s=0} v\!\left(\phi\!\left(s;\,\phi(\sigma;\rho)\right),t\right).
    \label{rhs-before}
\end{align}
We simplify the argument of $v$ using
\begin{align}
    \phi\!\left(s;\,\phi(\sigma;\rho)\right) = \phi(s+\sigma;\rho).
    \label{flow-comp}
\end{align}
Substituting \eqref{flow-comp} into \eqref{rhs-before}, we obtain
\begin{align*}
\frac{d}{ds}\bigg|_{s=0} v\!\left(\phi\!\left(s;\,\phi(\sigma;\rho)\right),t\right) =   \frac{d}{ds}\bigg|_{s=0} v\!\left(\phi(s+\sigma;\rho),t\right).
\end{align*}
By definition \eqref{u-def}, $v(\phi(s+\sigma;\rho),t) = u(s+\sigma,\rho,t)$.
Substituting and differentiating at $s=0$ gives
\begin{align}
    \frac{d}{ds}\bigg|_{s=0} u(s+\sigma,\rho,t) =
    \frac{\partial u}{\partial \sigma}(\sigma,\rho,t).
    \label{step2}
\end{align}
Combining \eqref{transport-identity}--\eqref{step2}, we conclude
\begin{align}
    a\!\left(\Psi^{-1}(\sigma,\rho)\right) \cdot
    \nabla v\!\left(\Psi^{-1}(\sigma,\rho),t\right) =
    \frac{\partial u}{\partial \sigma}(\sigma,\rho,t).
\end{align}
\noindent\textit{Step 3: reaction term.}
By the definition of $\hat{\lambda}$ and the relation \eqref{u-def}, we have
directly
\begin{align}
    \lambda\!\left(\Psi^{-1}(\sigma,\rho)\right)
    v\!\left(\Psi^{-1}(\sigma,\rho),t\right) =
    \hat{\lambda}(\sigma,\rho)\, u(\sigma,\rho,t).
\end{align}
\noindent\textit{Step 4: recirculation term.}
We need to transform $g(x)\, v(\rho(x),t)$ at $x = \Psi^{-1}(\sigma,\rho)$.
We first identify the recirculation point. Since $\rho \in \Gamma^-$ is the exit point of the characteristic through
$\Psi^{-1}(\sigma,\rho)$, we have
\begin{align}
    \rho\!\left(\Psi^{-1}(\sigma,\rho)\right) = \rho.
    \label{rho-id}
\end{align}
Next, the value of $v$ at $\rho$ is, by \eqref{u-def} evaluated
at $\sigma = 0$,
\begin{align}
    v(\rho, t) = v\!\left(\Psi^{-1}(0,\rho),t\right) = u(0,\rho,t).
    \label{v-at-rho}
\end{align}
Combining \eqref{rho-id}--\eqref{v-at-rho} with the definition of $\hat{g}$,
\begin{align*}
    g\!\left(\Psi^{-1}(\sigma,\rho)\right)
    v\!\left(\rho\!\left(\Psi^{-1}(\sigma,\rho)\right),t\right) =
    \hat{g}(\sigma,\rho)\, u(0,\rho,t).
\end{align*}
\textit{Step 5: nonlocal term.}
We next transform the term
\begin{align}
    \int_{\tau^-(x)}^0 f(x,\phi(s;x))\,
    v(\phi(s;x),t)\, |a(\phi(s;x))| \ ds.
    \label{nonlocal-orig}
\end{align}
We set $x = \Psi^{-1}(\sigma,\rho) = \phi(\sigma;\rho)$, so that $\tau^-(x) =
-\sigma$ and the integration interval is $[-\sigma, 0]$. Using \eqref{flow-comp}, we have 
$$\phi(s;x) = \phi(s;\phi(\sigma;\rho)) =
\phi(s+\sigma;\rho).$$
Then, we perform the change of variables $\sigma' = s + \sigma$,
which maps $s \in [-\sigma,0]$ to $\sigma'\in[0,\sigma]$ and gives
$$\phi(s;x) = \phi(\sigma';\rho) = \Psi^{-1}(\sigma',\rho),\ \ 
v(\phi(s;x),t) = u(\sigma',\rho,t),$$
and $ds = d\sigma'$, to obtain
\begin{align}
    &\int_{\tau^-(x)}^0 f(x,\phi(s;x))\,
    v(\phi(s;x),t)\, |a(\phi(s;x))| \ ds
    \nonumber \\
    &\qquad =
    \int_0^\sigma \hat{f}(\sigma,\rho,\sigma')\,
    u(\sigma',\rho,t) \ d\sigma'.
\end{align}

\noindent\textit{Conclusion.}
Summing the results of Steps 1--5 and using the fact that $v$ satisfies \eqref{eq_H},
we obtain \eqref{c_1}. The boundary condition \eqref{c_2} follows directly from
\eqref{bc} evaluated at $z = \phi(T(\rho);\rho) \in \Gamma^+$.

The converse is immediate by reversing each step and using the assumed regularity of $u\circ \Psi$.
\end{proof}

\begin{remark}
For each fixed $\rho \in \Gamma^-$, \eqref{c_1}--\eqref{c_2} is a one-dimensional 
first-order hyperbolic equation in $(\sigma,t)$ of exactly the same form as 
\eqref{p_eq1}--\eqref{p_eq2}, with spatial variable $\sigma \in [0,T(\rho)]$, 
unit velocity, reaction coefficient $\hat\lambda(\cdot,\rho)$, recirculation 
coefficient $\hat g(\cdot,\rho)$, nonlocal kernel $\hat f(\cdot,\rho,\cdot)$, and 
boundary control $U(\phi(T(\rho);\rho),t)$ applied at $\sigma = T(\rho)$. The 
continuity of the transformed coefficients (see Assumption \ref{ass5}) ensures that \eqref{c_1}--\eqref{c_2} 
falls within the framework of~\cite{krstic_H}.
\hfill $\bullet$
\end{remark}

In light of Theorem~\ref{thm1}, we introduce the following concept of solution 
to \eqref{eq_H}--\eqref{bc}.

\begin{definition}[Characteristic solution]\label{def1}
A characteristic solution to \eqref{eq_H}--\eqref{bc} starting from $v_o : \bar{\Omega}\setminus \Gamma^o \to \mathbb{R}$, where $\sigma \mapsto v_o(\phi(\sigma;\rho))$ belongs to $\mathcal{C}([0,T(\rho)])\cap \mathcal{C}^1(0,T(\rho))$ for each $\rho\in \Gamma^-$, is any function $v : (\bar{\Omega}\setminus\Gamma^o) 
\times [0,+\infty) \to \mathbb{R}$ of the form $v = u \circ \Psi$, 
where $(\sigma,t)\mapsto u(\sigma,\rho,t)$ belongs to $\mathcal{C}([0,T(\rho)]\times [0,+\infty))\cap \mathcal{C}^1((0,T(\rho))\times (0,+\infty))$ for each $\rho \in \Gamma^-$ and verifies \eqref{c_1} for all $(\sigma,t)\in (0,T(\rho))\times (0,+\infty)$, \eqref{c_2} for all $t\geq 0$, and $u(\sigma,\rho,0)= v_o(\phi(\sigma,\rho))$ for all 
$(\sigma,\rho)\in \hat{\Omega}$.
\hfill $\bullet$
\end{definition}

\begin{remark}
By the converse 
result in Theorem~\ref{thm1}, any solution in the sense of Definition~\ref{def1} 
that additionally belongs to $C((\bar{\Omega}\setminus\Gamma^o)\times[0,+\infty)) 
\cap C^1(\Omega\times(0,+\infty))$ is a classical solution. In other words, characteristic solutions that are regular enough must be classical solutions, and classical solutions are necessarily characteristic solutions, which justifies the proposed concept of solutions. It is worth noting that defining solutions to first-order hyperbolic equations through formulas derived from the method of characteristics is not new; see \cite[Definition 3.1]{coron_control}, \cite[Definition 4.1]{chitour_1} and \cite[Definition 3.1]{chitour_2} for related approaches in the context of controllability and stability analysis of systems of one-dimensional hyperbolic equations. The novelty of our formulation lies in the fact that the resulting characterisation is expressed in terms of PDEs that are suitable for PDE-based control techniques, such as backstepping. 
\hfill $\bullet$
\end{remark}

Given the characterization in Theorem \ref{thm1} and Definition \ref{def1}, finite-time stabilization of $\{v=0\}$, which consists of designing the control input $U$ such that $v(x,t)=0$ for all $x\in \bar{\Omega}\setminus \Gamma^o$ and all $t$ beyond a certain finite time, reduces to the finite-time stabilization of each PDE in the continuum representation. It is precisely here that Assumption \ref{ass4} plays a key role, as it ensures the existence of a uniform bound on the finite settling time beyond which the state of each PDE in the continuum representation can be driven to zero.

\section{Backstepping controller}

For each fixed $\rho$, we have a first-order hyperbolic equation on the interval $[0,T(\rho)]$, which is
precisely the class of systems studied in \cite{krstic_H}. The idea is therefore to apply, for each fixed $\rho$, the backstepping controller of \cite{krstic_H}.

First, we normalize the interval $[0,T(\rho)]$ to $[0,1]$. To do so, we introduce the rescaled spatial and time variables
\begin{align}
    \bar{\sigma} = \frac{\sigma}{T(\rho)} \in [0,1], \qquad
    \bar{t} = \frac{t}{T(\rho)} \geq 0,
\end{align}
and define the normalized state
\begin{align}
    \bar{v}(\bar{\sigma}, \rho, \bar{t}) = u\!\left(T(\rho)\bar{\sigma},\,
    \rho,\, T(\rho)\bar{t}\right). \label{barv-def}
\end{align}
To find the equation satisfied by $\bar{v}$, we substitute into \eqref{c_1}. Since
$u(\sigma,\rho,t) = \bar{v}(\sigma/T(\rho), \rho, t/T(\rho))$, we have
\begin{align*}
    \frac{\partial u}{\partial t} = \frac{1}{T(\rho)}
    \frac{\partial \bar{v}}{\partial \bar{t}}, \qquad
    \frac{\partial u}{\partial \sigma} = \frac{1}{T(\rho)}
    \frac{\partial \bar{v}}{\partial \bar{\sigma}}.
\end{align*}
For the nonlocal term, we change variables $\sigma' = T(\rho)\bar{y}$ in the
integral, giving $d\sigma' = T(\rho)\,d\bar{y}$ and
\begin{align*}
    \int_0^\sigma &\hat{f}(\sigma,\rho,\sigma')\,u(\sigma',\rho,t)\,d\sigma' \nonumber \\
    &
    = T(\rho)\int_0^{\bar{\sigma}}
    \hat{f}\!\left(T(\rho)\bar{\sigma},\rho,T(\rho)\bar{y}\right)
    \bar{v}(\bar{y},\rho,\bar{t})\,d\bar{y}.
\end{align*}
Substituting into \eqref{c_1} and multiplying through by $T(\rho)$, we obtain
\begin{align}
    \frac{\partial \bar{v}}{\partial \bar{t}} =&~
    \frac{\partial \bar{v}}{\partial \bar{\sigma}}
    + \bar{\lambda}(\bar{\sigma},\rho)\,\bar{v}(\bar{\sigma},\rho,\bar{t})
    + \bar{g}(\bar{\sigma},\rho)\,\bar{v}(0,\rho,\bar{t}) \nonumber \\
    &~+ \int_0^{\bar{\sigma}}
    \bar{f}(\bar{\sigma},\rho,\bar{y})\,
    \bar{v}(\bar{y},\rho,\bar{t})\,d\bar{y}, \label{c_8_pre}
\end{align}
where the rescaled coefficients are defined by
\begin{align*}
    \bar{\lambda}(\bar{\sigma},\rho) &=
    T(\rho)\,\hat{\lambda}\!\left(T(\rho)\bar{\sigma},\rho\right), \\
    \bar{g}(\bar{\sigma},\rho) &=
    T(\rho)\,\hat{g}\!\left(T(\rho)\bar{\sigma},\rho\right), \\
    \bar{f}(\bar{\sigma},\rho,\bar{y}) &=
    T(\rho)^2\,\hat{f}\!\left(T(\rho)\bar{\sigma},\rho,T(\rho)\bar{y}\right).
\end{align*}
The boundary condition \eqref{c_2} becomes
\begin{align}
    \bar{v}(1,\rho,\bar{t}) =
    U\!\left(\phi(T(\rho);\rho),\,T(\rho)\bar{t}\right). \label{c_9_pre}
\end{align}
Next, we absorb the reaction term into the nonlocal term. Concretely, we define
\begin{align}
    \Lambda(\bar{\sigma},\rho) =
    \int_0^{\bar{\sigma}} \bar{\lambda}(\xi,\rho)\,d\xi, \label{Lambda-def}
\end{align}
and the new state
\begin{align}
    w(\bar{\sigma},\rho,\bar{t}) =
    e^{\Lambda(\bar{\sigma},\rho)}\,\bar{v}(\bar{\sigma},\rho,\bar{t}).
    \label{tilde_v_def_final}
\end{align}
Note that $\Lambda(0,\rho) = 0$, so $w(0,\rho,\bar{t}) =
\bar{v}(0,\rho,\bar{t})$.
To derive the equation for $w$, we compute
\begin{align*}
    \frac{\partial w}{\partial \bar{t}} &=
    e^{\Lambda}\,\frac{\partial \bar{v}}{\partial \bar{t}}, \\
    \frac{\partial w}{\partial \bar{\sigma}} &=
    \bar{\lambda}\,e^{\Lambda}\,\bar{v} +
    e^{\Lambda}\,\frac{\partial \bar{v}}{\partial \bar{\sigma}} =
    \bar{\lambda}\,w +
    e^{\Lambda}\,\frac{\partial \bar{v}}{\partial \bar{\sigma}},
\end{align*}
so that 
$$e^{\Lambda}\,\frac{\partial \bar{v}}{\partial \bar{\sigma}} =
\frac{\partial w}{\partial \bar{\sigma}} - \bar{\lambda}\,w.$$
Multiplying \eqref{c_8_pre} by $e^{\Lambda(\bar{\sigma},\rho)}$ and substituting,
\begin{align*}
    \frac{\partial w}{\partial \bar{t}} =&~
    \left(\frac{\partial w}{\partial \bar{\sigma}} -
    \bar{\lambda}\,w\right) +
    \bar{\lambda}\,w +
    \bar{g}\,e^{\Lambda(\bar{\sigma},\rho)}\,\bar{v}(0,\rho,\bar{t}) \\
    &~+ \int_0^{\bar{\sigma}}
    \bar{f}(\bar{\sigma},\rho,\bar{y})\,
    e^{\Lambda(\bar{\sigma},\rho)}\,\bar{v}(\bar{y},\rho,\bar{t})\,d\bar{y}.
\end{align*}
The $\bar{\lambda}\,w$ terms cancel. Using $\bar{v}(0,\rho,\bar{t}) =
w(0,\rho,\bar{t})$ and
$e^{\Lambda(\bar{\sigma},\rho)}\bar{v}(\bar{y},\rho,\bar{t}) =
e^{\Lambda(\bar{\sigma},\rho)-\Lambda(\bar{y},\rho)}w(\bar{y},\rho,\bar{t})$,
we obtain
\begin{align}
    \frac{\partial w}{\partial \bar{t}} =&~
    \frac{\partial w}{\partial \bar{\sigma}} +
    G(\bar{\sigma},\rho)\,w(0,\rho,\bar{t}) \nonumber \\
    &~+ \int_0^{\bar{\sigma}}
    F(\bar{\sigma},\rho,\bar{y})\,w(\bar{y},\rho,\bar{t})\,d\bar{y},
    \label{c_8}
\end{align}
where
\begin{align*}
    G(\bar{\sigma},\rho) &=
    \bar{g}(\bar{\sigma},\rho)\,e^{\Lambda(\bar{\sigma},\rho)}, \\
    F(\bar{\sigma},\rho,\bar{y}) &=
    \bar{f}(\bar{\sigma},\rho,\bar{y})\,
    e^{\Lambda(\bar{\sigma},\rho)-\Lambda(\bar{y},\rho)}.
\end{align*}
The boundary condition \eqref{c_9_pre} becomes
\begin{align}
    w(1,\rho,\bar{t}) =
    e^{\Lambda(1,\rho)}\,U\!\left(\phi(T(\rho);\rho),\,T(\rho)\bar{t}\right).
    \label{c_9}
\end{align}
Finally, we define the backstepping transformation
\begin{align*}
    \bar{w}(\bar{\sigma},\rho,\bar{t}) = w(\bar{\sigma},\rho,\bar{t}) -
    \int_0^{\bar{\sigma}} k(\bar{\sigma},\bar{y};\rho)\,
    w(\bar{y},\rho,\bar{t})\,d\bar{y},
\end{align*}
where the kernel $k(\cdot,\cdot;\rho)$ (parametrized by $\rho$) is to be determined on the triangle
\begin{align*}
    \mathcal{T} = \left\{(\bar{\sigma},\bar{y}) \in [0,1]^2 \ : \
    0 \leq \bar{y} \leq \bar{\sigma} \leq 1\right\},
\end{align*}
such that $\bar{w}$ solves the \textit{target system}
\begin{align}
    \frac{\partial \bar{w}}{\partial \bar{t}} &=
    \frac{\partial \bar{w}}{\partial \bar{\sigma}},
    \label{target_1} \\
    \bar{w}(1,\rho,\bar{t}) &= 0. \label{target_2}
\end{align}
The solution to the target system verifies
\begin{align}
    \bar{w}(\bar{\sigma},\rho,\bar{t}) = 0 \qquad
    \forall\,\bar{\sigma}\in[0,1],\ \forall\,\bar{t}\geq 1. \label{w-zero}
\end{align}

The kernel equations are the same as in \cite{krstic_H}, namely,
\begin{align}
    \frac{\partial k}{\partial \bar{\sigma}} +
    \frac{\partial k}{\partial \bar{y}} &=
    \int_{\bar{y}}^{\bar{\sigma}} k(\bar{\sigma},\xi;\rho)\,
    F(\xi,\rho,\bar{y})\,d\xi - F,
    \label{kernel_1} \\
    k(\bar{\sigma},0;\rho) &=
    \int_0^{\bar{\sigma}} k(\bar{\sigma},\bar{y};\rho)\,G(\bar{y},\rho)\,d\bar{y}
    - G, \label{kernel_2}
\end{align}
and they admit, for each $\rho \in \Gamma^-$, a unique solution $k(\cdot,\cdot;\rho)
\in C^1(\mathcal{T})$.

To enforce $\bar{w}(1,\rho,\bar{t})=0$, it suffices to let
\begin{align}
    w(1,\rho,\bar{t}) =
    \int_0^1 k(1,\bar{y};\rho)\,w(\bar{y},\rho,\bar{t})\,d\bar{y}.
    \label{control_tilde}
\end{align}

Combining \eqref{control_tilde} with \eqref{c_9}, we obtain
\begin{align*}
e^{\Lambda(1,\rho)}\,U\!\left(\phi(T(\rho);\rho),\,T(\rho)\bar{t}\right) =
    \int_0^1 k(1,\bar{y};\rho)\,w(\bar{y},\rho,\bar{t})\,d\bar{y}.
\end{align*}
Substituting $w(\bar{y},\rho,\bar{t}) = e^{\Lambda(\bar{y},\rho)}
\bar{v}(\bar{y},\rho,\bar{t})$ and dividing by
$e^{\Lambda(1,\rho)}$,
\begin{align}
&U\!\left(\phi(T(\rho);\rho),\,T(\rho)\bar{t}\right) \nonumber \\
&\qquad =
    \int_0^1 k(1,\bar{y};\rho)\,
    e^{\Lambda(\bar{y},\rho)-\Lambda(1,\rho)}\,
    \bar{v}(\bar{y},\rho,\bar{t})\,d\bar{y}. \label{control_barv}
\end{align}
We now revert to the original variables. Setting $t = T(\rho)\bar{t}$ and
$\sigma = T(\rho)\bar{y}$ in \eqref{control_barv}, so that
$d\bar{y} = d\sigma/T(\rho)$ and $\bar{v}(\bar{y},\rho,\bar{t}) =
u(\sigma,\rho,t)$ by \eqref{barv-def}, we obtain
\begin{align}
    U\!\left(\phi(T(\rho);\rho),t\right) =
    \int_0^{T(\rho)} K(\sigma;\rho)\,u(\sigma,\rho,t)\,d\sigma, \label{control_law_u}
\end{align}
where $K:[0,T(\rho)]\times \Gamma^-\to\mathbb{R}$ is defined by
\begin{align*}
    K(\sigma;\rho) := \frac{1}{T(\rho)}\,
    k\!\left(1,\frac{\sigma}{T(\rho)};\rho\right)\,
    e^{\Lambda(\sigma/T(\rho),\rho)-\Lambda(1,\rho)}.
\end{align*}
Finally, we express \eqref{control_law_u} in terms of the original state $v$ on
$\Omega$. Let $z \in \Gamma^+$, and define the \textit{upstream curve} of $z$
as
\begin{align}
    C(z) := \left\{\phi(s;z) \ : \ \tau^-(z) \leq s \leq 0\right\},
\end{align}
i.e., the arc of the characteristic curve from the exit point $\rho(z)$ to $z$,
parameterized by backward time from $z$. As $\sigma$ runs from $0$ to $T(\rho(z))$,
the point $y = \Psi^{-1}(\sigma,\rho(z)) = \phi(\sigma;\rho(z))$ traces $C(z)$,
and the arc length element along $C(z)$ is $d\ell(y) = |a(y)|\,d\sigma$. Using
$u(\sigma,\rho(z),t) = v(\Psi^{-1}(\sigma,\rho(z)),t) = v(y,t)$ and $d\sigma =
d\ell(y)/|a(y)|$, the integral in \eqref{control_law_u} becomes
\begin{align}
    U(z,t) = \int_{C(z)} \mathcal{K}(z,y)\,v(y,t)\,d\ell(y), \label{control_law_x}
\end{align}
where $\mathcal{K}:\{(z,y)\in\Gamma^+\times\bar{\Omega}
: y \in C(z)\}\to\mathbb{R}$ is given by
\begin{align}
    \mathcal{K}(z,y) = \frac{K(\sigma(y);\rho(z))}{|a(y)|}. \label{gain_K}
\end{align}

Finally, let the initial condition $v_o : \bar{\Omega}\setminus \Gamma^o \to \mathbb{R}$, where $\sigma \mapsto v_o(\phi(\sigma;\rho))$ belongs to $\mathcal{C}([0,T(\rho)])\cap \mathcal{C}^1(0,T(\rho))$ for each $\rho\in \Gamma^-$. By the invertibility of the backstepping transformation for each $\rho$ and the $\mathcal{C}^1$ regularity of the corresponding inverse backstepping kernel \cite{krstic_H}, each equation in the continuum admits a unique classical solution starting from $v_o(\phi(\cdot;\rho))$ provided the compatibility condition $v_o(\phi(\sigma;\rho)) = U(\phi(\sigma;\rho),0)$ holds, and therefore the system \eqref{eq_H}--\eqref{bc} admits a unique characteristic solution starting from $v_o$.  

We can now state the following result.

\begin{theorem}\label{thm_main}
Consider the system \eqref{eq_H}-\eqref{bc} and let Assumptions \ref{ass1}--\ref{ass5} hold. Furthermore, let the control input be given by \eqref{control_law_x}. Then, for any initial condition $v_o : \bar{\Omega}\setminus \Gamma^o \to \mathbb{R}$, where $\sigma \mapsto v_o(\phi(\sigma;\rho))$ belongs to $\mathcal{C}([0,T(\rho)])\cap \mathcal{C}^1(0,T(\rho))$ for each $\rho\in \Gamma^-$, and verifying the compatibility condition $v_o(z)=U(v_o,z)$ on $\Gamma^+$, the system \eqref{eq_H}-\eqref{bc} admits a unique characteristic solution starting from $v_o$ that satisfies
\begin{align*}
    v(x,t) = 0 \qquad \forall\,x \in \bar{\Omega} \setminus \Gamma^o,\ \forall\,t \geq T_{\max}.
\end{align*}
\end{theorem}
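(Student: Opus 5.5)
The proof works curve by curve: for each fixed $\rho\in\Gamma^-$ we reduce to the one-dimensional theory of \cite{krstic_H}, and then use Assumption~\ref{ass4} to make the settling time uniform in $\rho$. By Definition~\ref{def1}, a characteristic solution is a function $v=u\circ\Psi$ such that each slice $u(\cdot,\rho,\cdot)$ is a classical solution of \eqref{c_1}--\eqref{c_2} with initial datum $v_o(\phi(\cdot;\rho))$. The control law \eqref{control_law_x} was obtained from \eqref{control_law_u} by reparametrizing the upstream curve $C(z)$ with $d\ell=|a|\,d\sigma$. Its restriction to $z=\phi(T(\rho);\rho)$ is therefore exactly the feedback \eqref{control_law_u} acting on the slice $u(\cdot,\rho,\cdot)$. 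Consequently the closed loop decouples in $\rho$, and existence and uniqueness of a characteristic solution reduce to existence and uniqueness for each slice.

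\emph{Step 1 (well-posedness on each slice).} Fix $\rho$. Apply in turn the rescaling \eqref{barv-def}, the exponential gauge \eqref{tilde_v_def_final} and the Volterra transformation with kernel $k(\cdot,\cdot;\rho)\in C^1(\mathcal T)$. This turns \eqref{c_1}--\eqref{c_2} with feedback \eqref{control_law_u} into the target system \eqref{target_1}--\eqref{target_2}. We then check three things.
\begin{itemize}
\item The regularity hypothesis on $v_o$, namely $v_o(\phi(\cdot;\rho))\in C([0,T(\rho)])\cap C^1(0,T(\rho))$, carries over to $\bar w(\cdot,\rho,0)$.
\item The compatibility condition $v_o(z)=U(v_o,z)$ at $z=\phi(T(\rho);\rho)$ becomes $\bar w(1,\rho,0)=0$.
\item The target system then has the unique classical solution $\bar w(\bar\sigma,\rho,\bar t)=\bar w(\bar\sigma+\bar t,\rho,0)$ for $\bar\sigma+\bar t\le 1$, and $\bar w=0$ otherwise.
\end{itemize}
The Volterra transformation is invertible with a $C^1$ inverse kernel (as in \cite{krstic_H}), and the gauge and rescaling are diffeomorphic. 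Pulling back therefore gives a unique slice solution with the required regularity. Uniqueness of $v$ follows because any characteristic solution yields, slice by slice, a solution of the same transformed problem.

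\emph{Step 2 (finite time, uniformly in $\rho$).} By \eqref{w-zero}, $\bar w(\cdot,\rho,\bar t)=0$ for $\bar t\ge1$. Inverting the Volterra transformation, which is linear and homogeneous, gives $w(\cdot,\rho,\bar t)=0$, hence $\bar v=0$, hence $u(\sigma,\rho,t)=0$ for all $t\ge T(\rho)$. Since $T(\rho)=\tau^+(\rho)-\tau^-(\rho)\le T_{\max}$ by Assumption~\ref{ass4}, we get $u(\cdot,\rho,t)=0$ for every $\rho$ and every $t\ge T_{\max}$. Because $\Psi$ is a bijection of $\bar\Omega\setminus\Gamma^o$ onto $\hat\Omega$, it follows that $v(x,t)=u(\Psi(x),t)=0$ for all such $x$ and $t$.

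\emph{Main obstacle.} The delicate point is Step 1, not the stability estimate: we must make sure that \cite{krstic_H} really applies on every slice. First, the coefficients $\bar\lambda,\bar g,\bar f$ are continuous for each fixed $\rho$; this is Assumption~\ref{ass5}, combined with continuity of $|a|$ along the flow. Second, the kernel equations \eqref{kernel_1}--\eqref{kernel_2} are solvable in $C^1(\mathcal T)$ by the successive-approximation argument of \cite{krstic_H}, carried out separately for each $\rho$. No regularity in $\rho$ is needed, since the solution concept is pointwise in $\rho$. Finally, we must verify that the compatibility condition matches the one required by the one-dimensional theory. I would handle this by writing out the explicit characteristic formula for $\bar w$ and transporting it back.
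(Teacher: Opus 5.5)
Your route is the paper's: slice-wise reduction via Theorem~\ref{thm1}, rescaling, exponential gauge and Volterra transformation onto \eqref{target_1}--\eqref{target_2}, then $T(\rho)\le T_{\max}$ for a uniform settling time. Step~2 is exactly the paper's finite-time argument and is fine.

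The genuine gap is in Step~1. There you assert that $\bar w(\bar\sigma,\rho,\bar t)=\bar w(\bar\sigma+\bar t,\rho,0)$ for $\bar\sigma+\bar t\le 1$, $\bar w=0$ otherwise, is a \emph{classical} solution. You also assert that its pull-back lies in $\mathcal{C}^1((0,T(\rho))\times(0,+\infty))$, as Definition~\ref{def1} demands. But the segment $\{\bar\sigma+\bar t=1,\ 0<\bar t<1\}$ lies inside that open rectangle. Across it, $\partial_{\bar\sigma}\bar w$ jumps from $\lim_{\xi\to1^-}\partial_{\bar\sigma}\bar w(\xi,\rho,0)$ (if this limit exists at all) to $0$. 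The zeroth-order condition $\bar w(1,\rho,0)=0$ only buys continuity.

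Even granting a $\mathcal{C}^1$ inverse kernel $l$, the inverse Volterra map cannot smooth this out. Extend $\bar w(\cdot,\rho,0)$ by zero to a continuous $h$ on $[0,+\infty)$. The integral part of the inverse map then equals $\int_{\bar t}^{\bar\sigma+\bar t}l(\bar\sigma,\xi-\bar t)h(\xi)\,d\xi$, which is $\mathcal{C}^1$, so the kink passes to $w$, $\bar v$ and $u$.

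Concretely, take $n=1$, $\Omega=(0,1)$, $a\equiv1$, $\lambda=g=f=0$. Then $\Gamma^-=\{0\}$, $\Gamma^+=\{1\}$, $k\equiv0$ and $U\equiv0$, and $v_o(x)=1-x$ meets every hypothesis of the statement. Yet integrating along characteristics shows that the only candidate is $v(x,t)=\max\{1-x-t,0\}$, which is not $\mathcal{C}^1$ on $(0,1)\times(0,+\infty)$. The disk example with $\gamma=0$ and $v_o=1-|x|^2$ behaves the same way. So the existence claim, as stated, is false, and Step~1 cannot be closed without changing either the hypotheses or the solution concept. The paper's one-line justification (invertibility of the transformation and the $\mathcal{C}^1$ inverse kernel) passes over exactly this point.

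To close the gap you need one of two repairs.
\begin{itemize}
\item A first-order compatibility condition on each slice: existence of $\lim_{\sigma\to T(\rho)^-}\frac{d}{d\sigma}v_o(\phi(\sigma;\rho))$ together with $\lim_{\bar\sigma\to1^-}\partial_{\bar\sigma}\bar w(\bar\sigma,\rho,0)=0$. This corresponds to differentiating \eqref{c_2} in time at $t=0$ and substituting \eqref{c_1}.
\item A weaker solution concept, for instance continuous slices satisfying \eqref{c_1} in integrated form along characteristics.
\end{itemize}
With either repair, your uniqueness argument and Step~2 go through unchanged. Writing out the explicit formula for $\bar w$, as you planned in your last paragraph, is exactly what exposes the missing condition.

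Relatedly, your appeal to $k(\cdot,\cdot;\rho)\in\mathcal{C}^1(\mathcal{T})$ is not available under Assumption~\ref{ass5} alone. Rearranging \eqref{kernel_2} gives $G(\bar\sigma,\rho)=\int_0^{\bar\sigma}k(\bar\sigma,\bar y;\rho)G(\bar y,\rho)\,d\bar y-k(\bar\sigma,0;\rho)$. If $k$ were $\mathcal{C}^1$, the right-hand side would be $\mathcal{C}^1$ in $\bar\sigma$, forcing $\sigma\mapsto g(\phi(\sigma;\rho))$ to be $\mathcal{C}^1$. With merely continuous $g$ the kernel is only continuous. Transporting $\mathcal{C}^1$ regularity through the Volterra map therefore also needs extra smoothness of the coefficients along characteristics, or again a characteristic-based notion of solution.
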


\section{Example}

Let $\Omega$ be the unit disk 
\begin{align*}
\Omega = \{x\in \mathbb{R}^2 \ : \ |x|<1\},
\end{align*}
whose unit outward normal is 
\begin{align*}
\nu (x) = x \qquad \forall x\in \partial \Omega.
\end{align*}
Furthermore, let 
\begin{align*}
a = (1,1).
\end{align*}
Then, by expanding the inner product $a\cdot \nu$, we have 
\begin{align*}
\Gamma^+ &= \{(x_1,x_2)\in \partial \Omega \ : \ x_1>-x_2\}, \\
\Gamma^- &= \{(x_1,x_2)\in \partial \Omega \ : \ x_1<-x_2\}, \\
\Gamma^o &= \{(x_1,x_2)\in \partial \Omega \ : \ x_1=-x_2\}.
\end{align*}
Note that Assumptions \ref{ass1}--\ref{ass2} hold.

The characteristic curve passing through $x\in \bar{\Omega}$ is the unique complete solution to 
\begin{align*}
\frac{d \phi}{ds} = (1,1), \qquad \phi(0;x) = x.
\end{align*}
That is, 
\begin{align*}
\phi(s;x) = (s,s)+x \qquad \forall s\in \mathbb{R}.
\end{align*}
\begin{figure}[H]
    \centering
    \includegraphics[width=\linewidth]{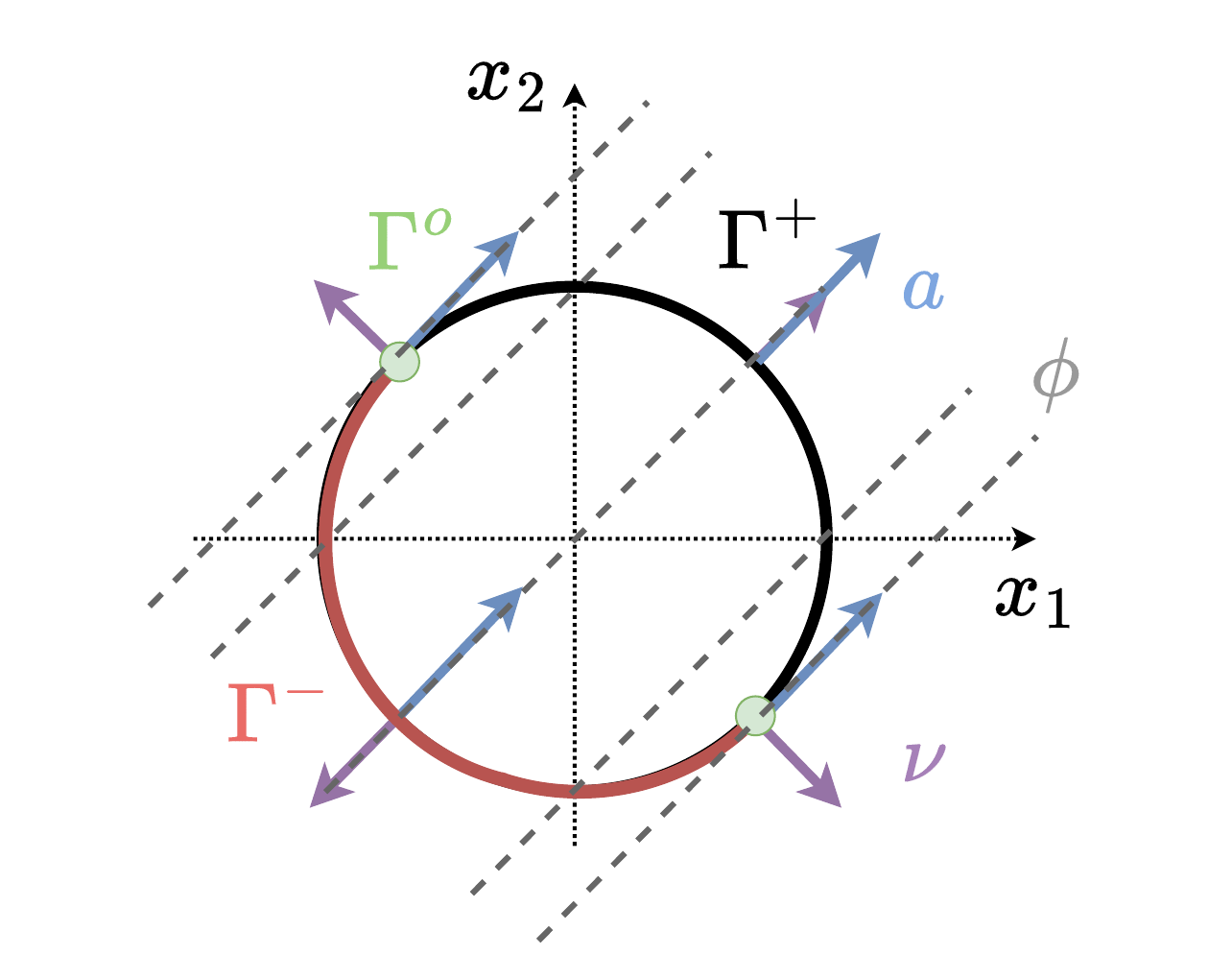}
\end{figure}
For each $x=(x_1,x_2)\in \bar{\Omega}\setminus \Gamma^o$, the entry time of $\phi(\cdot;x)$ is 
\begin{align*}
\tau^+(x) &= \inf \{s\geq 0 \ : \ (s+x_1,s+x_2)\in \Gamma^+\} \\
&= \inf \{s\geq 0 \ : \ (s+x_1)^2+(s+x_2)^2=1 \\ &\qquad \qquad \text{and} \ s+x_1>-(s+x_2)\}.
\end{align*}
The polynomial equation $(s+x_1)^2+(s+x_2)^2=1$ admits two roots 
\begin{align*}
s_{\pm} = \frac{-(x_1+x_2)\pm \sqrt{(x_1+x_2)^2+2(1-x_1^2-x_2^2)}}{2}.
\end{align*}
Additionally, from the condition $s+x_1>-(s+x_2)$, we must have 
\begin{align*}
s > \frac{-(x_1+x_2)}{2}.
\end{align*}
We therefore select the nonnegative root $s_+$, and obtain 
\begin{align*}
\tau^+(x) =  \frac{-(x_1+x_2)+\sqrt{(x_1+x_2)^2+2(1-x_1^2-x_2^2)}}{2}. 
\end{align*}
Similarly, we have 
\begin{align*}
\tau^-(x) = \frac{-(x_1+x_2)- \sqrt{(x_1+x_2)^2+2(1-x_1^2-x_2^2)}}{2}.
\end{align*}
Hence, the transit time of $\phi(\cdot;x)$ is given by 
\begin{align*}
T(x) &= \tau^+(x)-\tau^-(x) \\
&= \sqrt{(x_1+x_2)^2+2(1-x_1^2-x_2^2)} \\
&= \sqrt{2-(x_1-x_2)^2}.
\end{align*}
As a result, Assumption \ref{ass4} holds with 
\begin{align*}
T_{\max} = T(0) = \sqrt{2}. 
\end{align*}
We consider on $\Omega$ the PDE 
\begin{align}
\frac{\partial v}{\partial t} = \frac{\partial v}{\partial x_1} + \frac{\partial v}{\partial x_2} + \gamma \exp^{b(x_1+x_2)} v(\rho(x)), \label{pde_eg} 
\end{align}
where $\gamma,b>0$ are constants. Note that Assumption \ref{ass5} holds. Here, 
\begin{align*}
\rho(x) =&~ \phi(\tau^-(x);x) \\
=&~ (\tau^-(x),\tau^-(x))+x \\
=&~ \bigg(\frac{x_1-x_2- \sqrt{(x_1+x_2)^2+2(1-x_1^2-x_2^2)}}{2},\\
&~\qquad \frac{x_2-x_1- \sqrt{(x_1+x_2)^2+2(1-x_1^2-x_2^2)}}{2}\bigg).
\end{align*}
To achieve stabilization in finite time $T_{\max}=\sqrt{2}$, we implement the control law in \eqref{control_law_x} for all $z = (z_1,z_2)\in \Gamma^+$. The integration path is 
\begin{align*}
&C(z) = \bigg\{(s,s)+z \ : \\
&~\frac{-(z_1+z_2)- \sqrt{(z_1+z_2)^2+2(1-z_1^2-z_2^2)}}{2} \leq s \leq 0\bigg\}.
\end{align*}
It remains to calculate the kernel $\mathcal{K}$ defined in \eqref{gain_K}. For each $(z,y)\in \Gamma^+\times \bar{\Omega}$ such that $y\in C(z)$, we have 
\begin{align*}
\mathcal{K}(z,y) = \frac{\sqrt{2}}{2}K(\sigma(y);\rho(z)),
\end{align*}
where $\sigma(y) = -\tau^-(y)$, and 
\begin{align*}
K(\sigma(y);\rho(z)) =&~ \frac{1}{T(\rho(z))}\,
    k\left(1,\frac{\sigma(y)}{T(\rho(z))};\rho(z)\right).
\end{align*}
For each $\rho = (\rho_1,\rho_2)\in \Gamma^-$, the function $(\bar{\sigma},\bar{y})\mapsto k(\bar{\sigma},\bar{y};\rho)$ verifies, on the interior of $\mathcal{T} = \{(\bar{\sigma},\bar{y})\in [0,1]^2\ : \ 0\leq \bar{y}\leq \bar{\sigma} \leq 1\}$,
\begin{align*}
&\frac{\partial k}{\partial \bar{\sigma}} + \frac{\partial k}{\partial \bar{y}} = 0, \\
&k(\bar{\sigma},0;\rho) = T(\rho)\int_0^{\bar{\sigma}}k(\bar{\sigma},\bar{y};\rho)g(\phi(T(\rho)\bar{y};\rho))d\bar{y} \\
&~\qquad \qquad \quad -T(\rho)g\left(\phi \left(T(\rho)\bar{\sigma};\rho\right) \right),
\end{align*}
where 
\begin{align*}
g\left(\phi \left(T(\rho)\bar{\sigma};\rho\right) \right)=&~ g\left((T(\rho)\bar{\sigma},T(\rho)\bar{\sigma})+\rho\right)\\
=&~ \gamma \exp^{b(\rho_1+\rho_2)}\exp^{2b T(\rho)\bar{\sigma}}.
\end{align*}
The solution to the above equations has the form 
\begin{align*}
k(\bar{\sigma},\bar{y};\rho) = \eta(\bar{\sigma}-\bar{y};\rho),
\end{align*}
where $\eta$ will be determined. One has 
\begin{align*}
&\eta(\bar{\sigma};\rho) \\
&\quad = \gamma T(\rho) \exp^{b(\rho_1+\rho_2)}\int_0^{\bar{\sigma}}\eta(\bar{\sigma}-\bar{y};\rho)\exp^{2bT(\rho)\bar{y}} d \bar{y} \\
&\quad \quad - \gamma T(\rho) \exp^{b(\rho_1+\rho_2)}\exp^{2bT(\rho)\bar{\sigma}}. 
\end{align*}
Applying the Laplace transform in the $\bar{\sigma}$-variable, and denoting by $s\mapsto \hat{\eta}(s;\rho)$ the Laplace transform of $\bar{\sigma}\mapsto \eta(\bar{\sigma};\rho)$, we obtain 
\begin{align*}
\hat{\eta}(s;\rho) =&~ \gamma T(\rho)\exp^{b(\rho_1+\rho_2)}\frac{\hat{\eta}(s;\rho)}{s-2bT(\rho)}  \\
&~-\gamma T(\rho)\exp^{b(\rho_1+\rho_2)}\frac{1}{s-2bT(\rho)}.
\end{align*}
Hence, 
\begin{align*}
\hat{\eta}(s;\rho) = -\frac{\gamma T(\rho)\exp^{b(\rho_1+\rho_2)}}{s-\left(2b+\gamma \exp^{b(\rho_1+\rho_2)} \right)T(\rho)}.
\end{align*}
Applying the inverse Laplace transform, 
\begin{align*}
\eta(\bar{\sigma};\rho) = - \gamma T(\rho)\exp^{b(\rho_1+\rho_2)} \exp^{\left(2b+\gamma \exp^{b(\rho_1+\rho_2)} \right)T(\rho)\bar{\sigma}}.
\end{align*}
Therefore, 
\begin{align*}
&k(\bar{\sigma},\bar{y};\rho) \\
&\quad = - \gamma T(\rho)\exp^{b(\rho_1+\rho_2)} \exp^{\left(2b+\gamma \exp^{b(\rho_1+\rho_2)} \right)T(\rho)(\bar{\sigma}-\bar{y})}.
\end{align*}
As a result, we have an explicit expression for $\mathcal{K}$.

We now evaluate our controller numerically. To accommodate general initial conditions
$v_o$ that are not compatible with the controller $U$, and
inspired by \cite{coron_paper}, we modify the control law as
\begin{align*}
    v(z,t) &= U_{\text{new}}(v(\cdot,t),z)\\
           &= U(v(\cdot,t),z)+\varepsilon(z,t) \quad z\in \Gamma^+,\ t\geq 0,
\end{align*}
where
\begin{align*}
    \frac{\partial \varepsilon}{\partial t} &= -m_1 \operatorname{sign}(\varepsilon)|\varepsilon|^{m_2}
        \qquad m_1>0,\ m_2\in (0,1),\\
    \varepsilon(z,0) &= -U(v_o,z)+v_o(z).
\end{align*}
With this modification, any $v_o$ is compatible with $U_{\text{new}}$.
Moreover, since $\varepsilon$ converges to zero in finite time $t_o>0$ (which can be tuned
freely through the parameters $m_1$ and $m_2$), it follows that $v(\cdot,t_o)$ is compatible
with $U$, thereby guaranteeing that $v(x,t)=0$ for all $x\in \bar{\Omega}$ and all
$t\geq T_{\max}+t_o$.

The parameters are set to $\gamma=2$, $b=0.1$, $m_1=5$, and $m_2=0.5$.
The initial condition is chosen as
\begin{align*}
    v_o(x) = 1-x_1^2-x_2^2, \quad x\in \bar{\Omega},
\end{align*}
yielding $T_{\max}+t_o \approx 2.5$. The kernel $\mathcal{K}$ is depicted in
Figure~\ref{fig_kernel}.

\begin{figure}[ht]
    \centering
    \includegraphics[width=\linewidth]{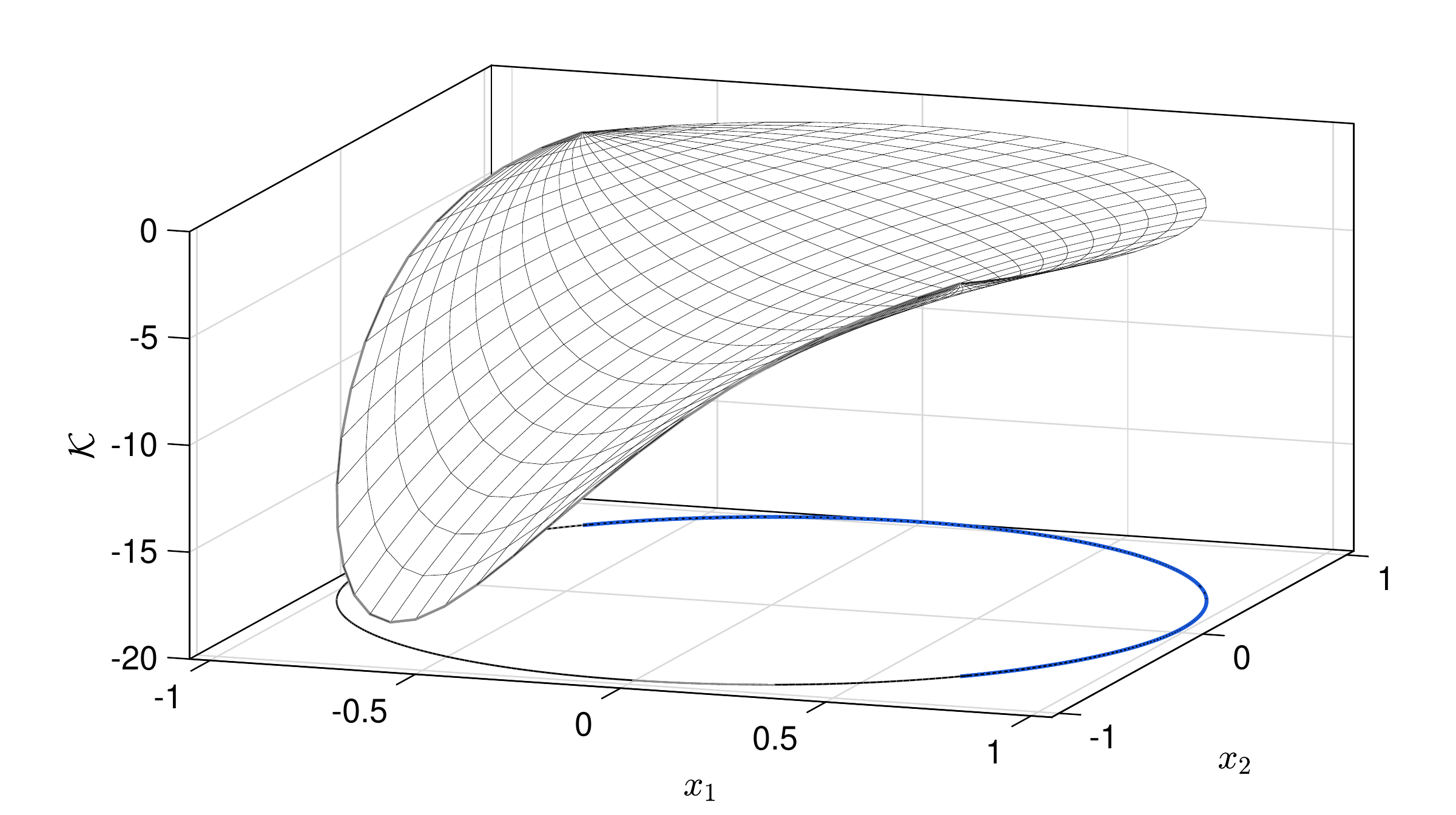}
    \caption{The kernel $\mathcal{K}$ for all $(z,y)\in \Gamma^+\times \bar{\Omega}$ such
             that $y\in C(z)$. The blue curve represents the set $\Gamma^+$.}
    \label{fig_kernel}
\end{figure}

The open-loop and closed-loop responses are then simulated and shown in
Figures~\ref{fig_open} and~\ref{fig_closed}, respectively. As expected, the state grows in open loop, whereas it converges to zero at the finite time $t=2.5$ in closed loop.

\begin{figure}
    \centering
    \includegraphics[width=\linewidth]{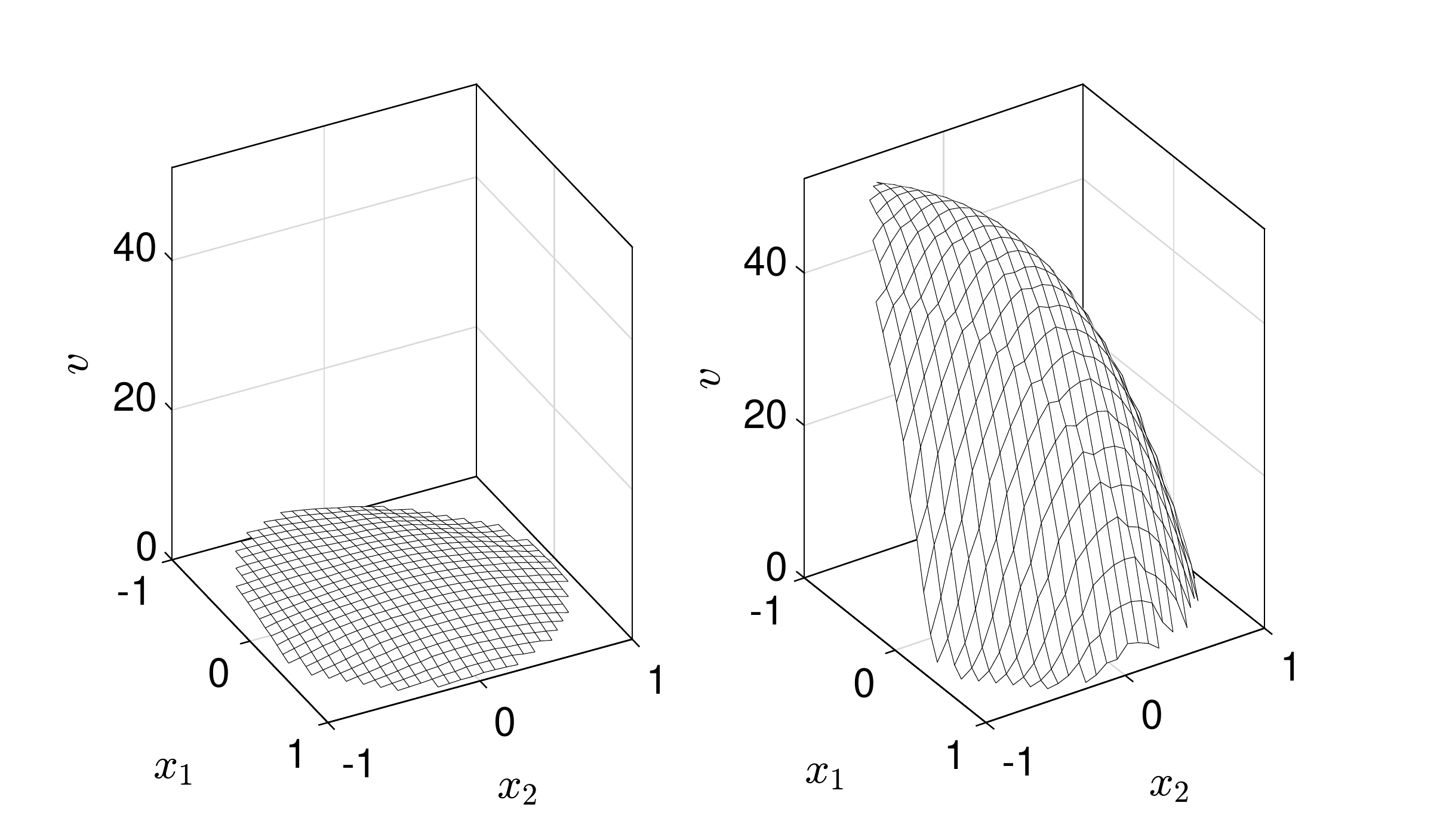}
    \caption{Open-loop response at $t=1$ (left) and $t=2.5$ (right).}
    \label{fig_open}
\end{figure}

\begin{figure}
    \centering
    \includegraphics[width=\linewidth]{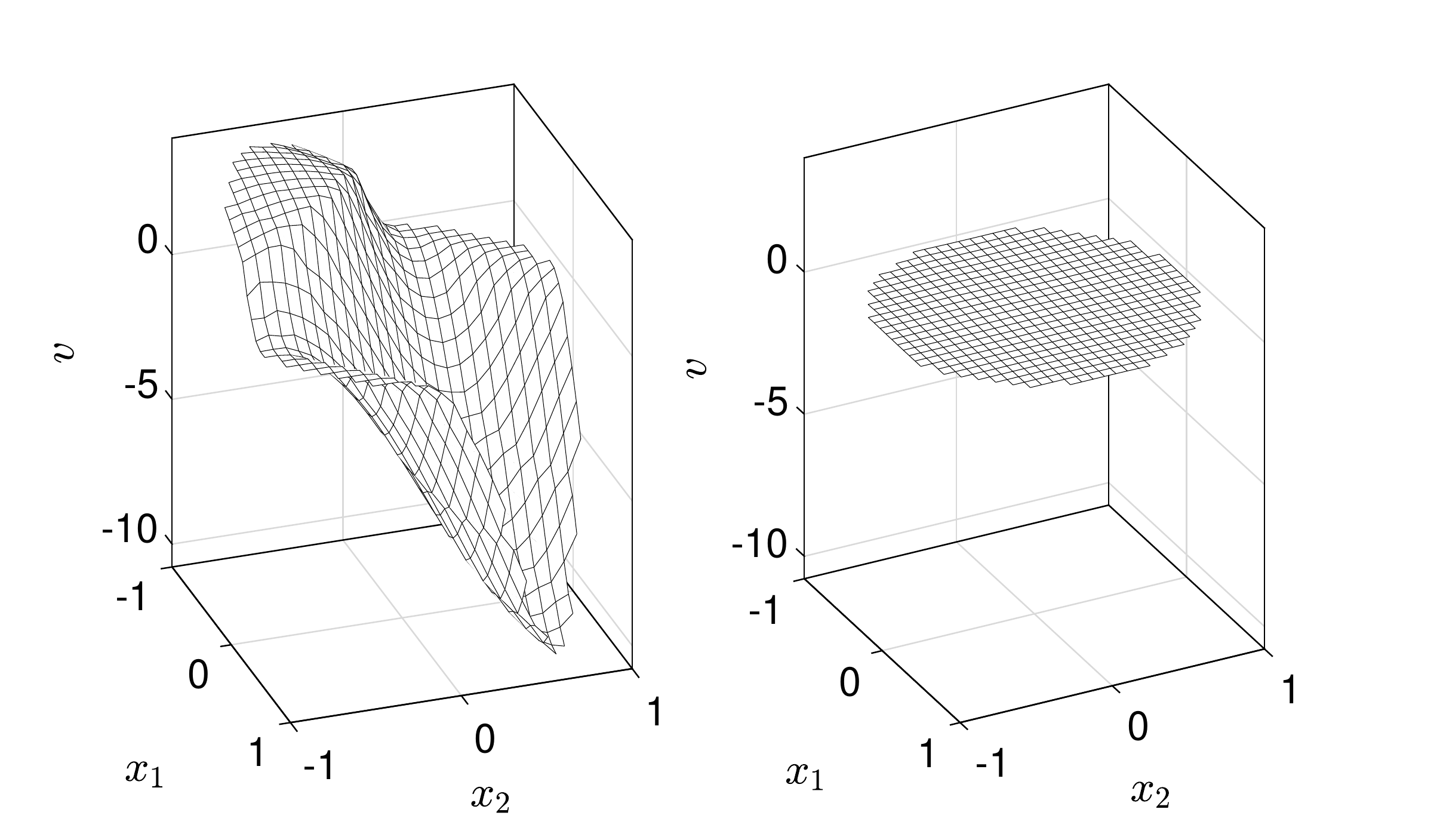}
    \caption{Closed-loop response at $t=1$ (left) and $t=2.5$ (right).}
    \label{fig_closed}
\end{figure}

\section{Conclusion}
This paper extended the backstepping approach of \cite{krstic_H} to first-order hyperbolic equations defined on $n$-dimensional
domains, for any $n \geq 1$, by decomposing the multidimensional equation into
a continuum of one-dimensional equations along characteristic curves and
applying the approach of \cite{krstic_H} to each equation. In a forthcoming work, we reveal how the proposed framework can be extended to interconnections of multidimensional first-order hyperbolic equations with colinear velocities.

It is worth investigating if other control approaches developed in the one-dimensional setting might also extend to the multidimensional case through our reduction. For example, Lyapunov-based techniques constitute one of the most important approaches for the stability analysis of one-dimensional hyperbolic systems \cite{lyapu_1,lyapu_2,lyapu_3,lyapu_4,lyapu_5,lyapu_6,lyapu_7,lyapu_PI}. In the multidimensional setting, several important extensions have been proposed in \cite{lyapu_8,lyapu_9,lyapu_10,lyapu_11} for establishing stability through Lyapunov functionals defined over the entire spatial domain. In light of the reduction established in Theorem~\ref{thm1}, one may attempt to construct a Lyapunov functional for each one-dimensional equation in the resulting continuum and then relate the corresponding stability properties back to those of the original multidimensional equation. Such an approach could, for example, yield bounds on the $\mathcal{C}^0$ norm of the state in dimensions $\geq 2$ without resorting to $H^s$ Lyapunov functionals with $s \geq 2$, relying instead on a continuum of $H^1$ Lyapunov functionals along each leaf. When considering a scalar multidimensional equation, or systems of equations with colinear transport velocities, only one foliation of the domain by characteristic curves needs to be accounted for. In more general settings, however, the Lyapunov analysis would have to accommodate the additional coupling induced by multiple, non-aligned foliations of the domain associated with distinct characteristic directions. Our reduction may also make it possible to extend spectral approaches developed for one-dimensional hyperbolic equations \cite{hastir,dus,camil_2} to the multidimensional setting. This approach would avoid studying an eigenvalue problem directly on the full multidimensional domain, requiring instead only the analysis of the family of one-dimensional equations arising in the continuum reduction. The spectral properties of the original multidimensional system could then be recovered from the corresponding properties of these one-dimensional equations/systems.

\section*{Acknowledgments}
The author is thankful to Prof. M. Krstic for insightful comments on an earlier version of this manuscript, and for pointing out the connection of Assumption~\ref{ass4} with the theory of geometric optics, which motivated the terminology \textit{non-trapping characteristics}.

\par\noindent 
\parbox[t]{\linewidth}{
\noindent\parpic{\includegraphics[height=1.5in,width=1in,clip,keepaspectratio]{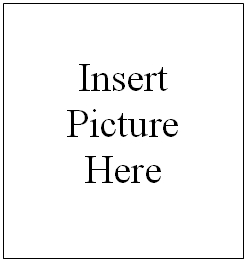}}
\noindent {\bf Mohamed Camil Belhadjoudja}\ was born in Algiers, Algeria, on November 28, 1998. He received the Engineering degree and the M.Sc. degree in automatic control from the National Polytechnic School, Algeria, in 2021, and a second M.Sc. degree in automatic control from Grenoble Alpes University, France, in 2022. He received the Ph.D. degree in automatic control from Grenoble Alpes University in 2025. Since 2026, he has been a Postdoctoral Fellow at the University of Waterloo, Canada. His research interests lie in control and estimation of infinite-dimensional systems.}

\end{document}